\documentclass[conference]{IEEEtran}

\usepackage[T1]{fontenc}
\usepackage{amsmath,amssymb,amsthm}
\usepackage{graphicx}
\usepackage{booktabs}
\usepackage{url}
\usepackage{cite}
\usepackage{xcolor}
\usepackage{microtype}
\usepackage{algorithm}
\usepackage{algpseudocode}
\usepackage[acronym,nomain]{glossaries}
\makeglossaries
\newacronym{hwsn}{HWSN}{heterogeneous wireless sensor network}
\newacronym{rid}{RID}{R\'enyi information dimension}
\newacronym{snr}{SNR}{signal-to-noise ratio}
\newacronym{nmse}{NMSE}{normalized mean square error}
\newacronym{awgn}{AWGN}{additive white Gaussian noise}
\newacronym{pf}{PF}{proportional-fair}
\newacronym{isac}{ISAC}{integrated sensing and communication}
\newacronym{ecg}{ECG}{electrocardiogram}
\newacronym{MAC}{MAC}{medium access control}
\newtheorem{theorem}{Theorem}

\newtheorem{remark}{Remark}

\begin{document}

\setlength{\textfloatsep}{1pt}
\setlength{\abovecaptionskip}{1pt}
\setlength{\belowcaptionskip}{1pt}
\setlength{\abovedisplayskip}{3pt}
\setlength{\belowdisplayskip}{3pt}
\allowdisplaybreaks

\title{Power Reduction in Heterogeneous Wireless Sensor Networks via Source-Aware Allocation}

\author{\IEEEauthorblockN{Mauro Marchese\IEEEauthorrefmark{1}, Pietro Savazzi\IEEEauthorrefmark{1}\IEEEauthorrefmark{2}}
\IEEEauthorblockA{\IEEEauthorrefmark{1}Department of Electrical, Computer and Biomedical Engineering, University of Pavia, Italy\\
E-mail: mauro.marchese01@universitadipavia.it}
}

\maketitle

\begin{abstract}
\Glspl{hwsn} in space and extreme environments must reliably transmit diverse analog
physical signals over resource-constrained fading channels, subject
to bandwidth limitations, power budgets, and reconstruction quality
requirements.
This paper addresses two fundamental questions: (i)~what is the minimum
\gls{snr} a sensing link must sustain to reconstruct
an analog signal at a prescribed distortion, regardless of the decoder
used, and (ii)~how can knowledge of the signal's intrinsic structure
be exploited to jointly allocate power and bandwidth across an \gls{hwsn}?
Both questions are answered through the
\gls{rid}, which quantifies the intrinsic
complexity of an analog source distribution.
By combining the \gls{rid} with rate-distortion theory and Shannon channel
capacity, a closed-form \gls{snr} lower bound is derived, parameterized
solely by the source \gls{rid}.
Building on these foundations, a cross-layer resource allocation
framework is introduced that exploits the per-node \gls{rid} to jointly
assign transmit power and bandwidth, achieving strict power saving relative to a Gaussian-assumption baseline while
guaranteeing prescribed reconstruction quality and outage constraints
at every node.
\end{abstract}

\begin{IEEEkeywords}
R\'enyi information dimension, rate-distortion theory, resource allocation,
heterogeneous wireless sensor networks.
\end{IEEEkeywords}

\glsresetall
\section{Introduction}
\label{sec:intro}

\Glspl{hwsn} underpin a broad
class of cyber-physical systems, including industrial automation,
structural health monitoring, environmental surveillance, and space applications like remote
physiological monitoring in extreme environments~\cite{elkhediri2022wireless,
Jamshed2022, trigka2025}.
In each of these applications, sensor nodes must reliably transmit
analog physical measurements over low-power radio links
subject to stringent energy budgets, limited spectrum, and fading
propagation~\cite{Karimi2024,Tossa2025}.
The battery lifetime of each node is the primary design bottleneck:
once a node exhausts its energy reserve, the sensing coverage it
provided is permanently lost, degrading the network's ability to
monitor the target phenomenon~\cite{Sahoo2024}.

A substantial body of literature addresses the energy-efficiency
problem in \glspl{hwsn} from multiple angles.
At the network layer, clustering protocols organize
nodes into groups so that only cluster-heads forward aggregated data,
dramatically reducing the number of transmissions and extending network
lifetime~\cite{Sahoo2024, elkhediri2022wireless}.
At the protocol layer, energy-efficient communication schemes
design adaptive duty cycles, low-overhead routing, and resilient
data-aggregation mechanisms to minimize idle listening and
retransmissions~\cite{Dhabliya2022}.
At the physical and access layer, advanced multiple
access techniques such as non-orthogonal multiple access (NOMA) allow multiple
nodes to share the same spectrum with differentiated power levels,
improving spectral and energy efficiency for IoT-dense
deployments~\cite{Raj2022}.
Machine-learning-based approaches have also been proposed to optimize
energy consumption by learning adaptive transmission and sleep
schedules from historical traffic patterns~\cite{Bagwari2023}.
Despite their diversity, these methodologies share a common implicit
assumption: the power budget required to maintain a reliable link to
the gateway is treated as signal-independent quantity and no source-aware approach has been investigated so far.

From an information-theoretic standpoint, the fundamental question is:
\emph{what is the minimum \gls{snr} that a sensing
link must sustain to reconstruct an analog measurement at a prescribed
quality, regardless of the decoder used?}
A similar problem has been investigated in \cite{gastpar2005}, where a lower bound on the achievable distortion has been  derived under Gaussian source assumptions.
Extending such bounds to arbitrary source distributions for source-aware resource allocation requires a
measure of intrinsic source complexity that generalizes beyond
Gaussianity.

The \gls{rid}~\cite{renyi1959dimension,
wu2010renyi} provides exactly this measure: it quantifies the
effective number of degrees of freedom per sample of an analog
source distribution, independently of any specific model or
reconstruction algorithm.
Through the Kawabata--Dembo theorem~\cite{kawabata1994rate}, the \gls{rid}
governs the high-resolution rate-distortion behavior of a source,
making it the natural parameter for expressing fundamental
communication limits in terms of signal structure rather than
Gaussian assumptions.
Sources with \gls{rid} strictly below unity, as is the case for structured physical signals, can be reliably reconstructed
at a strictly lower \gls{snr} than Gaussian sources at the same target distortion, implying that the power budgets derived under Gaussian
assumptions are systematically inflated for most \gls{hwsn} nodes.
Exploiting this structural advantage at the resource allocation level
requires a cross-layer framework that connects source-aware power and bandwidth assignment at the physical layer, an
approach that has not been
investigated in the context of \glspl{hwsn}.

In light of the above discussion, this work is aimed to develop a comprehensive cross-layer resource allocation framework for \glspl{hwsn} that leverages the \gls{rid} to transcend traditional Gaussian assumptions and optimize energy budgets based on actual source complexity.
The contributions of this work are:
\begin{itemize}
  \item \textbf{Fundamental \gls{snr} bound}: A closed-form lower bound on the \gls{snr} required to reconstruct an arbitrary analog source with achievable \gls{nmse} distortion~$D$,
        parameterized solely by the source \gls{rid}, which quantifies the intrinsic complexity of the source. This bound allows for the design of novel resource allocation schemes for \gls{hwsn} and provides insights for the deployment of sensor nodes (maximum sensing distance).  
  \item \textbf{\Gls{rid}-aware cross-layer resource allocation framework}: A novel power and bandwidth allocation framework, accounting for the intrinsic source complexity (\gls{rid}), has been developed. The design introduces the concept of
        \emph{power spectral cost}, which relates power and bandwidth and allows for simplifying the optimization problem. Simulation results confirm that the proposed scheduling algorithm effectively reduces power consumption in \gls{hwsn}, with a power saving of almost $4.5$ dB compared to the source-ignorant Gaussian baseline.
\end{itemize}

\section{System Model and Background}
\label{sec:model}

\subsection{Signal Acquisition and Channel Model}

Consider a \gls{hwsn} in which $K$ sensor nodes transmit analog
measurements to a common gateway.
Node $k$ measures a scalar analog zero mean source $X_k \sim p_{X_k}$ with
power $P_{k} = \mathbb{E}[X_k^2]$ and transmits it over a
Rayleigh flat-fading channel.
The received signal at the gateway is modeled as
\begin{equation}
  Y_k = \sqrt{h_k}\,X_k + N_k,
  \label{eq:channel}
\end{equation}
where $h_k$ is the instantaneous channel power gain and
$N_k\sim\mathcal{N}(0,\sigma_k^2)$ is \gls{awgn}
with variance $\sigma_k^2 = N_0 B_k$, with $N_0$ being the
one-sided noise spectral density and $B_k$ the bandwidth allocated
to node~$k$.
Under Rayleigh fading, the channel gain is exponentially
distributed as $h_k \sim (1/\bar{h}_k)\mathrm{exp}(1/\bar{h}_k)$, where
$\bar{h}_k = \mathbb{E}[h_k]$ is the mean channel gain.

The mean channel gain follows the log-distance path-loss
model
\begin{equation}
  \bar{h}_k = h_0\!\left(\frac{r_0}{r_k}\right)^{\!n},\quad
  h_0 = \left(\frac{\lambda}{4\pi r_0}\right)^{\!2},
  \label{eq:pathloss}
\end{equation}
where $r_k$ is the sensor-to-gateway distance, $r_0$ is a reference
distance, $n \geq 2$ is the path-loss exponent, and $\lambda$ is the
carrier wavelength.
The instantaneous \gls{snr} at the gateway for node $k$ is
$\mathrm{SNR}_k = h_k P_k / (N_0 B_k)$, where $P_k$ is the transmit
power allocated to node~$k$.
The corresponding mean \gls{snr} is
\begin{equation}
  \overline{\mathrm{SNR}}_k = \frac{\bar{h}_k P_k}{N_0 B_k}.
  \label{eq:mean_snr}
\end{equation}
At the gateway, a decoder $\hat{X}_k = g_k(Y_k)$ reconstructs the source with
\gls{nmse} distortion
\begin{equation}
  D_k = \frac{\mathbb{E}[(X_k - \hat{X}_k)^2]}{P_{k}}.
  \label{eq:nmse}
\end{equation}

\subsection{The R\'enyi Information Dimension}

The \gls{rid} provides a characterization of the intrinsic complexity of an analog
source distribution~\cite{renyi1959dimension, wu2010renyi}.
Let $\langle X \rangle_m = \lfloor mX \rfloor / m$ denote the
$1/m$-quantized version of a source $X$.
The \gls{rid} of $X$ is defined as
\begin{equation}
  d(X) = \lim_{m \to \infty} \frac{H(\langle X \rangle_m)}{\log_2 m},
  \label{eq:rid}
\end{equation}
where $H(\cdot)$ denotes Shannon entropy.
Specifically,
\begin{equation}
    H(\langle X\rangle_m)= -\sum_{\ell=L_{min}}^{L_{max}} p_{\ell}(m) \log_2 p_{\ell}(m),
\end{equation}
where 
\begin{equation}
p_{\ell}(m) = P\left( \frac{\ell}{m} \leq X < \frac{\ell+1}{m} \right)= \int_{\ell/m}^{(\ell+1)/m} p_X \, dx,    
\end{equation} 
and $L_{min} = \lfloor m x_{\min} \rfloor$, $L_{max} = \lfloor m x_{\max} \rfloor$ define the range of indices covering the support $[x_{\min}, x_{\max}]$.
Intuitively, $d(X)$ measures the rate at which the quantization entropy
grows as the resolution is refined.
The \gls{rid} satisfies $d(X) \in [0,1]$ for scalar sources and
it can be directly estimated from data by evaluating $H(\langle X
\rangle_m)$ for increasing values of $m$ and measuring the asymptotic slope.
Importantly, the \gls{rid} is a property of the source distribution,
independent of any specific compression algorithm or channel model.

Under the assumption of high resolution, or small distortion, the rate-distortion function of $X$ can be related to the \gls{rid} and the rate-distortion function of gaussian sources. This result is summarized in the following theorem.

\begin{theorem}[Kawabata--Dembo~\cite{kawabata1994rate}]
\label{thm:kd}
Let $X$ have finite \gls{rid} $d(X)$ and let the distortion measure be the
\gls{nmse}. Then, as $D \to 0$,
\begin{equation}
  R(D) = \frac{d(X)}{2}\log_2\frac{1}{D}
        + o\!\left(\log_2\frac{1}{D}\right).
  \label{eq:kd}
\end{equation}
\end{theorem}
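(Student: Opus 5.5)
The plan is to prove matching upper and lower bounds on $R(D)$, both of the form $\frac{d(X)}{2}\log_2\frac{1}{D}$ plus lower-order terms, by comparing $R(D)$ with the quantization entropy $H(\langle X\rangle_m)$ at resolution $m\asymp D^{-1/2}$. Since the distortion is normalized by the fixed power $P=\mathbb{E}[X^2]$, an NMSE of $D$ is an MSE of $PD$. A constant factor in the distortion only adds an $O(1)$ term to $\log_2(1/D)$, so the normalization does not affect the leading term. We assume the limit in (\ref{eq:rid}) exists and that $H(\lfloor X\rfloor)<\infty$; finite second moment gives the latter.

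\textbf{Achievability (upper bound).} Take the scalar quantizer $\hat X=\langle X\rangle_m+\frac{1}{2m}$. Its MSE is at most $\frac{1}{4m^2}$, so choosing $m=\lceil (4PD)^{-1/2}\rceil$ meets NMSE $\le D$. Losslessly coding i.i.d.\ blocks of quantizer outputs at rate $H(\langle X\rangle_m)$ gives
\[
R(D)\le H(\langle X\rangle_m)=d(X)\log_2 m+o(\log_2 m).
\]
Since $\log_2 m=\frac12\log_2\frac1D+O(1)$, this yields
\[
R(D)\le \frac{d(X)}{2}\log_2\frac1D+o\!\left(\log_2\frac1D\right).
\]

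\textbf{Converse (lower bound).} This is the main obstacle. Let $P_{\hat X|X}$ satisfy the MSE constraint $\mathbb{E}[(X-\hat X)^2]\le PD$. We need to show $I(X;\hat X)\ge H(\langle X\rangle_m)-o(\log m)$ for $m\asymp D^{-1/2}$. The planned route is:
\begin{enumerate}
\item Write $I(X;\hat X)\ge I(\langle X\rangle_m;\hat X)=H(\langle X\rangle_m)-H(\langle X\rangle_m\mid \hat X)$ by data processing.
\item Bound $H(\langle X\rangle_m\mid\hat X)\le H(\langle X\rangle_m-\langle \hat X\rangle_m\mid \hat X)$, since $\langle \hat X\rangle_m$ is a function of $\hat X$.
\item Note that the integer-valued difference $m(\langle X\rangle_m-\langle \hat X\rangle_m)$ has second moment $O(m^2PD+1)=O(1)$ when $m^2D=O(1)$.
\item Conclude by the maximum-entropy bound for integer variables with bounded second moment (discrete Gaussian/geometric type) that this conditional entropy is $O(1)$.
\end{enumerate}
The delicate point is choosing $m$ so that $m^2D$ stays bounded while $\log_2 m=\frac12\log_2\frac1D-O(1)$; taking $m=\lfloor D^{-1/2}\rfloor$ does both. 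Then
\[
R(D)\ge d(X)\log_2 m-o(\log_2 m)-O(1)=\frac{d(X)}{2}\log_2\frac1D-o\!\left(\log_2\frac1D\right).
\]

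Combining the two bounds gives (\ref{eq:kd}). Here we rely on the RID limit existing along the sequence $m\asymp D^{-1/2}$, and on the standard fact (Wu--Verd\'u) that the limit over integer $m$ agrees with the limit over real resolutions. If the RID limit did not exist, the same argument would sandwich $\liminf$ and $\limsup$ by the lower and upper RID, as in the original Kawabata--Dembo statement.
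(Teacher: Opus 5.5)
Your proof is correct. The paper gives no proof of this statement; it imports it from Kawabata--Dembo, so your argument is a self-contained substitute rather than a variant of anything in the text. It is the standard quantization sandwich. The midpoint scalar quantizer at resolution $m\asymp D^{-1/2}$ gives $R(D)\le H(\langle X\rangle_m)$. For the converse, the integer offset $Z=\lfloor mX\rfloor-\lfloor m\hat X\rfloor$ satisfies $|Z|\le m|X-\hat X|+1$, hence $\mathbb{E}[Z^2]\le 2m^2PD+2=O(1)$. This bounds $H(\langle X\rangle_m\mid\hat X)$ by a constant uniformly over all admissible test channels.

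Compared with the bare citation, your route buys two things. First, it makes explicit the hypotheses the paper leaves implicit: existence of the limit defining $d(X)$, and $\mathbb{E}[X^2]<\infty$. The latter gives $H(\lfloor X\rfloor)<\infty$ and hence $H(\langle X\rangle_m)\le H(\lfloor X\rfloor)+\log_2 m<\infty$, which your Step~1 identity needs. Second, it proves something sharper than the stated asymptotics. Your converse also works at the achievability resolution $m=\lceil(4PD)^{-1/2}\rceil$, since there $m^2PD\le 2(\tfrac14+PD)$ is bounded too. So you get $H(\langle X\rangle_m)-O(1)\le R(D)\le H(\langle X\rangle_m)$ at a single $m$. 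The $o(\log_2(1/D))$ term is then exactly the convergence error of $H(\langle X\rangle_m)/\log_2 m$, and the ``delicate'' choice of $m$ is not delicate at all. The citation buys generality in return: Kawabata--Dembo treat vector sources and give the $\liminf$/$\limsup$ version without assuming the limit exists.

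Two small cleanups.
\begin{itemize}
\item In Step~2 the relation is an equality, because given $\hat X$ the shift by $\langle\hat X\rangle_m$ is a bijection. The inequality you actually use is $H(Z\mid\hat X)\le H(Z)$, followed by the unconditional bound $H(Z)=h(Z+U)\le\frac12\log_2\bigl(2\pi e\,\mathbb{E}[(Z+U)^2]\bigr)$, with $U$ uniform on $[0,1)$ and independent of $Z$. You should state this.
\item The appeal to the equivalence of integer and real resolutions is unnecessary. Both of your choices of $m$ are integers, so existence of the limit along integer $m$ is all you use.
\end{itemize}
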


Theorem~\ref{thm:kd} generalizes the classical Gaussian
formula $R_{\mathrm{G}}(D) = \frac{1}{2}\log_2(1/D)$ (corresponding
to $d(X)=1$) to arbitrary source distributions.
The fundamental implication is that, for a given target distortion~$D$,
a source with $d(X) < 1$ requires only $d(X)$ times as many bits per
sample as a Gaussian source of the same power
\begin{equation}
R(D)\approx  d(X) R_{\mathrm{G}}(D) . 
\end{equation}
This structural advantage is the information-theoretic foundation of the results derived in the following sections.

\section{Fundamental SNR Bound}
\label{sec:bound}
In this section, the minimum \gls{snr} required by any decoder to reconstruct $X$ from the noisy observation $Y$ with \gls{nmse} target distortion $D$ is derived.

\subsection{Derivation}
The derivation combines Th.~\ref{thm:kd} with the \gls{awgn} channel capacity. The key result is summarized in the following theorem.

\begin{theorem}
\label{thm:snrbound}
Under the channel model~\eqref{eq:channel}, a necessary condition for
reconstructing $X$ with achievable distortion $D$ (for $D \ll 1$) is
\begin{equation}
  \mathrm{SNR} \;\geq\; \mathrm{SNR}_{\min}(D) = D^{-d(X)} - 1.
  \label{eq:snrbound}
\end{equation}
Conversely, for $\mathrm{SNR} > \mathrm{SNR}_{\min}(D)$, there exists a
(possibly computationally complex) decoder that achieves distortion~$D$.
\end{theorem}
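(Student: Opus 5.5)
The plan is to argue through the source--channel separation principle, with the Kawabata--Dembo asymptotics of Theorem~\ref{thm:kd} standing in for the Gaussian rate-distortion function. We read ``SNR'' as the per-use \gls{snr} of the \gls{awgn} link \eqref{eq:channel} with the gain $h$ held fixed over the block, i.e.\ the instantaneous $\mathrm{SNR}=hP/(N_0B)$. We also assume one channel use per source sample, which is the bandwidth-matched case implicit in the statement. The capacity per channel use is then $C(\mathrm{SNR})=\tfrac12\log_2(1+\mathrm{SNR})$.

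For the \emph{converse}, take any encoder/decoder pair acting on $n$ source samples and $n$ channel uses, under a power constraint, that achieves \gls{nmse} at most $D$. By the data-processing inequality, $I(X^n;\hat X^n)\le I(U^n;Y^n)$, where $U^n$ is the channel input. The right-hand side is at most $nC(\mathrm{SNR})$ by the converse of the \gls{awgn} coding theorem (maximum entropy under a power constraint). The left-hand side is at least $nR(D)$ by the definition of the rate-distortion function and its single-letterization for memoryless sources. This gives $R(D)\le \tfrac12\log_2(1+\mathrm{SNR})$. Substituting \eqref{eq:kd} and dropping the $o(\log_2(1/D))$ term in the regime $D\ll1$ yields $\tfrac{d(X)}{2}\log_2(1/D)\le\tfrac12\log_2(1+\mathrm{SNR})$. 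This is equivalent to $1+\mathrm{SNR}\ge D^{-d(X)}$, which is \eqref{eq:snrbound}.

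For the \emph{achievability} part, suppose $\mathrm{SNR}>D^{-d(X)}-1$, so that $C(\mathrm{SNR})>\tfrac{d(X)}{2}\log_2(1/D)\approx R(D)$. Choose a rate $R$ strictly between $R(D)$ and $C$. The rate-distortion achievability theorem gives a source code at rate $R$ with distortion at most $D+\epsilon$. Shannon's channel coding theorem gives a channel code at rate $R$ with vanishing error probability. Cascading the two, as in the separation theorem, gives end-to-end distortion approaching $D$. Decoding errors contribute a vanishing amount because the source has finite power, which bounds the distortion on error events, for example after truncating reconstructions. This decoder is the ``possibly computationally complex'' one in the statement, since it uses long block codes.

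The main obstacle is rigor at the asymptotic step. Theorem~\ref{thm:kd} holds only up to an $o(\log_2(1/D))$ term, so \eqref{eq:snrbound} is exact only to first order in the exponent. The slack is a factor $D^{-o(1)}$, which may be large in absolute terms even though it is negligible on a logarithmic scale. I would therefore state both directions as holding in the $D\to0$ sense: for every $\delta>0$ and all sufficiently small $D$, the converse gives $\mathrm{SNR}\ge D^{-(d(X)-\delta)}-1$ and achievability holds for $\mathrm{SNR}\ge D^{-(d(X)+\delta)}-1$. The displayed bound is then the leading-order expression. A secondary point is the fading: the bound is per realization of $h$, and converting it to a mean-\gls{snr} requirement is deferred to the outage analysis that follows.
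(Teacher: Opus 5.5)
Your proposal is correct and follows the paper's route. Bound $R(D)$ by $C_{\mathrm{AWGN}}=\tfrac12\log_2(1+\mathrm{SNR})$, substitute the Kawabata--Dembo expansion, exponentiate, and get achievability from separation and the achievability of $C_{\mathrm{AWGN}}$. Your explicit data-processing chain and your $\delta$-slack treatment of the $o(\log_2(1/D))$ term make rigorous, and correctly qualify as leading-order, what the paper gets by simply dropping that term.

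One point should be stated openly. Your achievability scheme, like the paper's one-line appeal to capacity, uses a block encoder as well as a decoder, so it goes beyond the literally uncoded link \eqref{eq:channel}, where $X$ itself is the channel input. On that uncoded link, for $0<d(X)<1$, even the MMSE decoder needs $\mathrm{SNR}=D^{-1+o(1)}$, so you should not describe your scheme as merely ``a decoder.''
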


\begin{proof}
Reconstruction with distortion $D$ requires extracting at least $R(D)$
bits of information about $X$ from $Y$, i.e.,
$I(X; Y) \geq R(D)$.
Since the mutual information is upper bounded by the \gls{awgn} channel
capacity,
\begin{equation}
  I(X; Y) \leq C_{\mathrm{AWGN}} = \tfrac{1}{2}\log_2(1 + \mathrm{SNR}),
\end{equation}
the necessary condition $R(D) \leq C_{\mathrm{AWGN}}$, combined with
Theorem~\ref{thm:kd}, yields
\begin{equation}
  \frac{d(X)}{2}\log_2\frac{1}{D}
  \;\leq\; \frac{1}{2}\log_2(1 + \mathrm{SNR}).
\end{equation}
Exponentiating both sides gives $D^{-d(X)} \leq 1 + \mathrm{SNR}$,
which is~\eqref{eq:snrbound}.
Achievability follows from the achievability of $C_{\mathrm{AWGN}}$.
\end{proof}

\begin{remark}
The bound~\eqref{eq:snrbound} is algorithm-agnostic: it holds
regardless of the reconstruction algorithm deployed at the receiver.
It therefore defines the informational boundary beyond which no
algorithm, however sophisticated, can succeed. Moreover, in case of Gaussian sources, \eqref{eq:snrbound} reduces to 
\begin{equation}\label{eq:Dverdu}
    D=\frac{1}{1+\text{SNR}_{\min}},
\end{equation}
which corresponds to \cite[Eq. 123]{wu2011mmse}. Therefore, the bound in \eqref{eq:snrbound} serves as a generalization of \eqref{eq:Dverdu}.
\end{remark}

Moreover, for small distortion ($D \ll 1$), $\mathrm{SNR}_{\min}(D) \approx D^{-d(X)}$,
or equivalently in dB,
\begin{equation}
  \mathrm{SNR}_{\min}^{(\mathrm{dB})}(D) \approx d(X)  |D^{(\mathrm{dB})}|.
  \label{eq:snr_db}
\end{equation}
Thus, $d(X)$ acts as a \emph{scaling factor} on the required \gls{snr} in
logarithmic scale.
A source with $d(X) < 1$ enjoys a direct, quantifiable \gls{snr} advantage
over a Gaussian source.

\subsection{Pareto Frontiers in the SNR--Distortion Plane}

Equation~\eqref{eq:snrbound} defines a family of Pareto frontiers in
the $(\mathrm{SNR}_{\min}, D)$ plane, one for each value of $d(X) \in [0, 1]$.
These curves are shown in Fig. \ref{fig:paretofrontiers}, and each curve represents a fundamental limit: the region below it is
unachievable by any decoder.
The Gaussian source ($d(X) = 1$) constitutes the worst case, requiring
the highest minimum \gls{snr} to achieve any prescribed distortion.
As $d(X)$ decreases from unity toward zero, the required minimum \gls{snr}
decreases monotonically for every fixed distortion target, reflecting
the increasing structural simplicity of the source.

\begin{figure}[!t]
    \centering
    \includegraphics[width=\columnwidth]{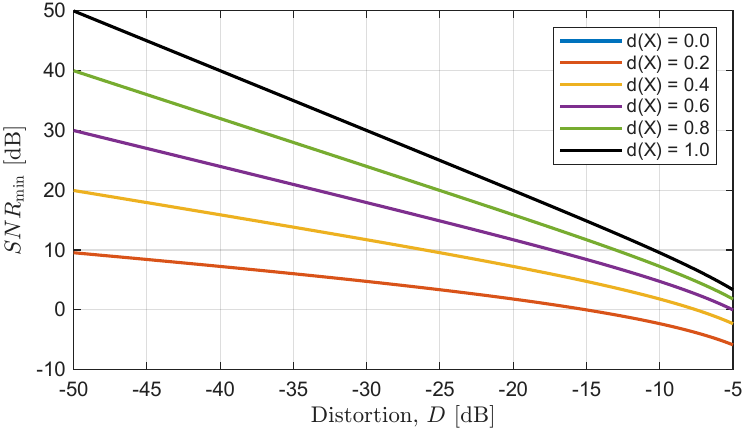}
    \caption{Minimum SNR required ($SNR_{\min}$) as a function of the target NMSE distortion $D$ for different values of the source R{\'e}nyi information dimension $d(X)$. The black curve ($d(X) = 1.0$) represents the classical Gaussian-assumption baseline, which constitutes the worst-case resource requirement. As the source complexity decreases ($d(X) \rightarrow 0$), the Pareto frontier shifts downward, demonstrating a direct and quantifiable SNR advantage for structurally simpler analog sources at any distortion level.}
    \label{fig:paretofrontiers}
\end{figure}

\section{Proposed Cross-Layer Resource Allocation Framework}
\label{sec:crosslayer}
This section introduces the proposed source-aware resource allocation framework for \glspl{hwsn}, based on the model in Sec. \ref{sec:model} and the fundamental \gls{snr} bound in Th. \ref{thm:snrbound}.

\subsection{Outage Probability under Rayleigh Fading}

Given the \gls{snr} requirement in \eqref{eq:snrbound}, an outage event for node $k$ is defined as the event in which
the instantaneous \gls{snr} falls below the fundamental threshold
\begin{equation}
  \mathcal{O}_k =
  \bigl\{\mathrm{SNR}_k < \mathrm{SNR}_{\min}(D_k)\bigr\}.
  \label{eq:outage_event}
\end{equation}
Since, under Rayleigh fading the instantaneous \gls{snr} is exponentially distributed with mean $\overline{\mathrm{SNR}}_k$, giving an outage probability of $P_{\mathrm{out}}^{(k)} = 1 - \exp(-\mathrm{SNR}_{\min}(D_k)/\overline{\mathrm{SNR}}_k)$. Substituting~\eqref{eq:snrbound} yields
\begin{equation}
  P_{\mathrm{out}}^{(k)} = 1 - \exp\!\left(
    -\frac{D_k^{-d_k} - 1}{\,\overline{\mathrm{SNR}}_k\,}
  \right).
  \label{eq:pout_ray}
\end{equation}
where the per-node target distortion is $D_k$ and the source \gls{rid} is $d_k$.
Equation~\eqref{eq:pout_ray} shows that for a fixed mean \gls{snr},
the outage probability decreases monotonically as $d_k$ decreases,
since the threshold $D_k^{-d_k} - 1$ is a decreasing function of
$d_k$ for any $D_k < 1$.
This confirms that the structural advantage of low-\gls{rid} sources
carries over to the fading regime.

Given an outage constraint $P_{\mathrm{out}}^{(k)}\leq\varepsilon_k \in (0,1)$, the required mean
\gls{snr} is obtained by inverting~\eqref{eq:pout_ray}
\begin{equation}
  \overline{\mathrm{SNR}}_k(\varepsilon_k)
  = \frac{D_k^{-d_k} - 1}{\ln\bigl(1/(1-\varepsilon_k)\bigr)}.
  \label{eq:snravg_req}
\end{equation}

Substituting \eqref{eq:pathloss} and \eqref{eq:mean_snr} into~\eqref{eq:snravg_req} and solving for
$r_k$ yields the maximum sensor-to-gateway distance for node $k$ achieving outage
$P_{\mathrm{out}}^{(k)} \leq \varepsilon_k$ with transmit power $P_k$ and
bandwidth $B_k$ is
\begin{equation}
  r_k^{\max} = r_0\left(
    \frac{h_0\,P_k\,\ln\!\bigl(1/(1-\varepsilon_k)\bigr)}
         {N_0 B_k\,(D_k^{-d_k}-1)}
  \right)^{\!\!1/n}.
  \label{eq:rmax}
\end{equation}
Three structural properties follow from~\eqref{eq:rmax}.
First, $r_k^{\max}$ is a decreasing function of $d_k$: structurally
simpler sources can be reliably acquired at greater distances under the
same link budget.
Second, $r_k^{\max} \propto P_k^{1/n}$ indicates that transmit power and range
trade off at a rate governed by the propagation environment.
Third, $r_k^{\max}$ is increasing in $\varepsilon_k$: relaxing the
outage constraint permits longer sensing links at the cost of more
frequent reconstruction failures. In conclusion, \eqref{eq:rmax} serves as a \gls{rid}-aware link-budget equation for the design of \glspl{hwsn}.

\subsection{Source-Aware Allocation: Problem Formulation}
This section formulates the proposed resource allocation framework.
Consider the gateway allocating transmit power $P_k$ and bandwidth
$B_k$ to each node $k \in \mathcal{K} = \{1,\ldots,K\}$, subject to
a total bandwidth $B_{\mathrm{tot}}$ and a total power budget
$P_{\mathrm{tot}}$.
Each node $k$ is characterized by its source \gls{rid} $d_k$, distortion
target $D_k$, distance $r_k$, and outage budget $\varepsilon_k$.
With the objective of minimizing network power consumption, the resource allocation problem is formulated as
\begin{align}
(\text{P1}): \quad \min_{\{P_k,\,B_k\}} \quad & \sum_{k=1}^K P_k \label{eq:opt_obj} \\
\text{s.t.} \quad & P_{\mathrm{out}}^{(k)}(P_k, B_k) \leq \varepsilon_k, \quad \forall k, \label{eq:out_constr} \\
& \sum_{k=1}^K B_k \leq B_{\mathrm{tot}}, \label{eq:bw_constr} \\
& \sum_{k=1}^K P_k \leq P_{\mathrm{tot}}, \label{eq:pwr_constr} \\
& P_k\geq 0, B_k \geq B_k^{\min}, \quad \forall k, \label{eq:pos_constr}
\end{align}
where $B_k^{\min} > 0$ is a minimum bandwidth.
Under Rayleigh fading, constraint~\eqref{eq:out_constr} is equivalent,
via \eqref{eq:snravg_req}, to
\begin{equation}
  \overline{\mathrm{SNR}}_k
  = \frac{\bar{h}_k\,P_k}{N_0 B_k}
  \;\geq\;
  \frac{D_k^{-d_k} - 1}{\ln\!\bigl(1/(1-\varepsilon_k)\bigr)},
  \label{eq:snr_constr}
\end{equation}
which is a linear constraint in the ratio $P_k/B_k$.

\subsection{Power Spectral Cost and Closed-Form Policy}
In order to provide a simple allocation scheme, it can be noted that,
activating constraint~\eqref{eq:snr_constr} with equality, the
minimum transmit power for node $k$ with a given bandwidth $B_k$ is
\begin{equation}
  P_k^*(B_k) = \pi_k  B_k,
  \label{eq:pk_star}
\end{equation}
where the \emph{power spectral cost} $\pi_k$ of node $k$ is defined as
\begin{equation}
  \pi_k =
  \frac{N_0\,(D_k^{-d_k}-1)}
       {\bar{h}_k\,\ln\!\bigl(1/(1-\varepsilon_k)\bigr)}.
  \label{eq:price}
\end{equation}
The power spectral cost $\pi_k$ encapsulates all node-specific
parameters into a single scalar that quantifies the power cost of
allocating one unit of bandwidth to node $k$.
Crucially, $\pi_k$ depends on the physical nature of the sensed
process: nodes reporting
low-\gls{rid} signals have smaller $\pi_k$ and can therefore be served
more efficiently.

\begin{remark}
A \gls{rid}-\emph{unaware} (Gaussian-assumption) baseline assigns each node
$\pi_k^{\mathrm{G}} = N_0(D_k^{-1}-1)/[\bar{h}_k\ln(1/(1-\varepsilon_k))]$,
obtained by setting $d_k = 1$ in~\eqref{eq:price}. This corresponds to source-ignorant resource allocation scheme.
The per-node power saving factor of a \gls{rid}-aware policy is
\begin{equation}
  \eta_k^P
  = \frac{\pi_k^{\mathrm{G}}}{\pi_k}
  = \frac{D_k^{-1} - 1}{D_k^{-d_k} - 1} \;>\; 1
  \quad \forall\;d_k < 1,
  \label{eq:eta_power}
\end{equation}
which is strictly greater than one for any source with $d_k < 1$. 
\end{remark}
Since $P_k^*(B_k)$ is linear in $B_k$, the total power consumed by
a bandwidth allocation $\{B_k\}_{k=1}^K$ is
$\sum_{k=1}^K \pi_k B_k$.
Minimizing this over feasible bandwidth allocations subject
to~\eqref{eq:bw_constr}--\eqref{eq:pos_constr} decouples into
\begin{align}
(\text{P2}): \quad \min_{\{B_k \geq B_k^{\min}\}} \quad & \sum_{k=1}^K \pi_k B_k \label{eq:bw_opt_obj} \\
\text{s.t.} \quad & \sum_{k=1}^K B_k \leq B_{\mathrm{tot}}. \label{eq:bw_opt_constr}
\end{align}
Problem P2 is a linear program with a closed-form
solution: assign the minimum floor $B_k^{\min}$ to all nodes, then
allocate all remaining bandwidth to the node with the smallest $\pi_k$
\begin{equation}
  B_k^* = B_k^{\min}
  + \Big(B_{\mathrm{tot}} - \sum_{j=1}^K B_j^{\min}\Big)\delta_{kk^*},
  \label{eq:bw_policy}
\end{equation}
where $k^* = \arg\min_k\, \pi_k$.
The corresponding power for each node is then obtained as $P_k^* = \pi_k B_k^*$.

\begin{algorithm}[t]
  \caption{\gls{rid}-Aware Cross-Layer Resource Allocation}
  \label{alg:rid_alloc}
  \begin{algorithmic}[1]
    \Require Active nodes $\mathcal{K}$;
             total bandwidth $B_{\mathrm{tot}}$;
             total power budget $P_{\mathrm{tot}}$;
             per-node parameters $\{d_k,D_k,\,\varepsilon_k,\,r_k,\,
             B_k^{\min}\}_{k \in \mathcal{K}}$
    \Ensure Per-node allocations $\{P_k,\, B_k\}_{k \in \mathcal{K}}$
    \State Compute excess bandwidth $B_{\mathrm{exc}}$
    \For{each active node $k \in \mathcal{K}$}
      \State Compute $\pi_k$ using \eqref{eq:price}
      \State Assign bandwidth following \eqref{eq:bw_pf}
      \State Compute required power using \eqref{eq:pk_star}
    \EndFor
    \If{$\sum_k P_k > P_{\mathrm{tot}}$}
      \State Compute scaling factor: $\alpha = P_{\mathrm{tot}} / \sum_k P_k$
      \For{each node $k \in \mathcal{K}$}
        \State Rescale power as $P_k \leftarrow \alpha P_k$
      \EndFor
    \EndIf
    \State Transmit allocation $(P_k,\, B_k)$ to each node~$k$
  \end{algorithmic}
\end{algorithm}

\subsection{Proportional-Fair Bandwidth Allocation}

When concentrating all excess bandwidth at a single node is
undesirable, a \gls{pf} variant is preferred. To ensure the minimum floor is respected, we first allocate $B_k^{\min}$ to each node and then distribute the excess bandwidth $B_{\mathrm{exc}} = B_{\mathrm{tot}} - \sum_j B_j^{\min}$ inversely proportional to the power spectral costs
\begin{equation}
  B_k^{\mathrm{PF}} = B_k^{\min} + B_{\mathrm{exc}}  \frac{1/\pi_k}{\sum_j 1/\pi_j}.
  \label{eq:bw_pf}
\end{equation}
Under this policy, nodes with lower $\pi_k$---i.e., nodes reporting
more structured, low-\gls{rid} signals---are assigned wider bandwidth. Note that in highly heterogeneous scenarios, nodes with very high $\pi_k$ will effectively receive a total bandwidth very close to $B_k^{\min}$.

\subsection{RID-Aware Cross-Layer Scheduling Algorithm}

Algorithm~\ref{alg:rid_alloc} summarizes the proposed cross-layer
resource allocation policy.
At each scheduling interval, the gateway
(i)~collects the \gls{rid} of each active node,
(ii)~computes the power spectral costs $\{\pi_k\}$,
(iii)~assigns bandwidth via the \gls{pf}
rule~\eqref{eq:bw_pf}, and
(iv)~informs each node of its allocated $B_k$ and power $P_k$.

The feasibility condition for the power budget constraint is
$\sum_{k \in \mathcal{K}} \pi_k B_k^{\min} \leq P_{\mathrm{tot}}$.
If the total power after allocation exceeds $P_{\mathrm{tot}}$, the algorithm applies a uniform rescaling factor $\alpha < 1$. This ensures that the power budget is respected while maintaining the proportional fairness of the bandwidth distribution, although it results in an increase in the effective achievable distortions $D_{k,\mathrm{eff}}$.
The \gls{rid}-aware policy minimizes the occurrence of this rescaling by
reducing $\pi_k$ for structured-signal nodes.

\begin{table}[t]
\renewcommand{\arraystretch}{1.25}
\caption{Simulation Parameters}
\label{tab:sim_params}
\centering
\begin{tabular}{lll}
\hline
\textbf{Parameter} & \textbf{Symbol} & \textbf{Value} \\
\hline
Carrier frequency          & $f_c$            & 2.4\,GHz         \\
Path-loss exponent         & $n$              & 3.5              \\
Reference distance         & $d_0$            & 1\,m             \\
Total bandwidth            & $B_\mathrm{tot}$ & 2\,MHz           \\
Minimum bandwidth fraction & $B_\mathrm{min}/B_\mathrm{tot}$ & 0.1 (200\,kHz) \\
Receiver noise figure      & $\mathrm{NF}$    & 7\,dB            \\
Reference temperature      & $T_0$            & 290\,K           \\
Target distortion           & $D_k$            & $-25$\,dB        \\
Outage probability budget  & $\varepsilon_k$  & 0.1              \\
Total power budget         & $P_\mathrm{tot}$ & 10\,dBm          \\
Number of nodes            & $K$              & 8                \\
\hline
\end{tabular}
\end{table}

\begin{table}[t]
\renewcommand{\arraystretch}{1.25}
\caption{Per-Node Configuration}
\label{tab:node_params}
\centering
\setlength{\tabcolsep}{5pt}
\begin{tabular}{lcccccccc}
\hline
\textbf{Node} & N1 & N2 & N3 & N4 & N5 & N6 & N7 & N8 \\
\hline
RID, $d_k$ & 0.5 & 0.65 & 0.92 & 0.5 & 0.65 & 0.92 & 0.65 & 0.5 \\
Distance, $r_k$ [m] & 5 & 8 & 10 & 15 & 20 & 25 & 12 & 18 \\
\hline
\end{tabular}
\end{table}

\begin{figure}[!t]
    \centering
    \includegraphics[width=\columnwidth]{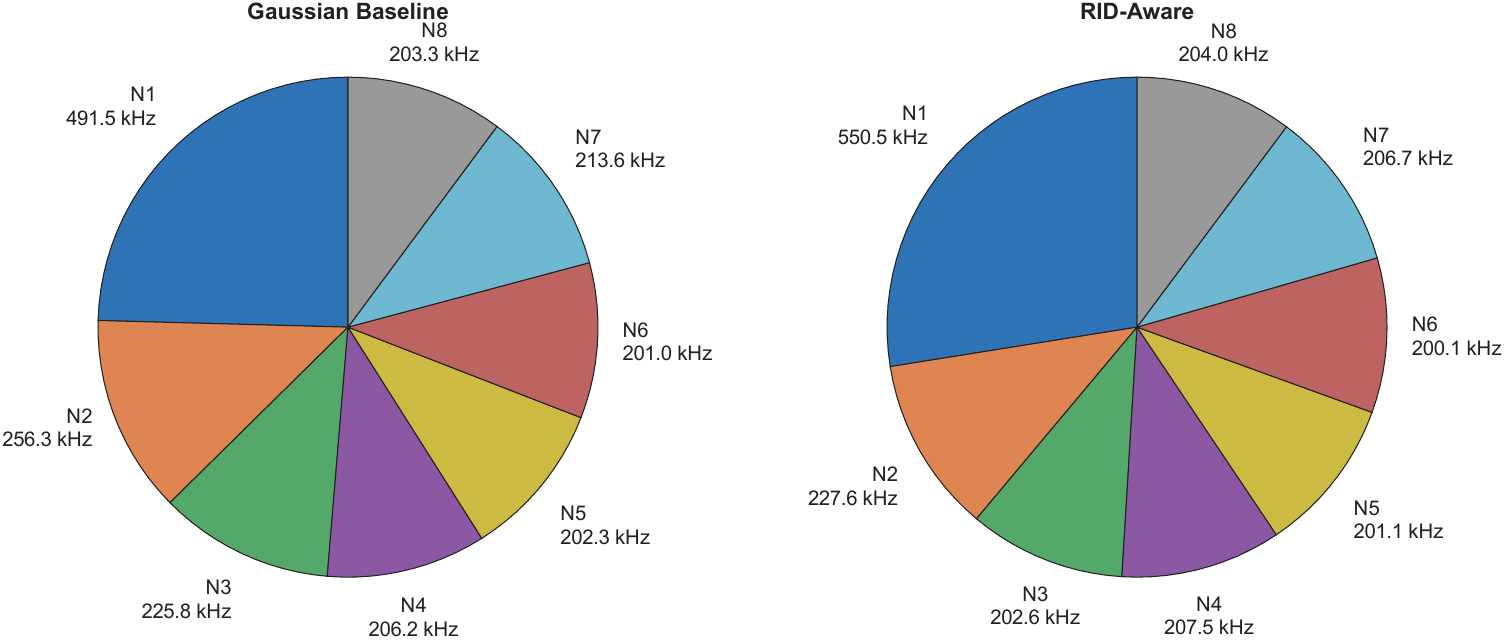}
    \caption{Bandwidth allocation comparison between the Gaussian baseline (left) and the proposed \gls{rid}-aware \gls{pf} policy (right). The \gls{rid}-aware approach adaptively redistributes bandwidth based on the structural complexity of each node's signal.}
    \label{fig:BWallocation}
\end{figure}

\section{Simulation Results}

\subsection{Simulation Scenario}

The proposed \gls{rid}-aware cross-layer resource allocation framework is
evaluated on an \gls{hwsn} consisting of $K = 8$ sensor nodes
transmitting to a common gateway over independent Rayleigh fading
channels. The nodes report physically distinct signal classes and sensor-to-gateway distances range from 5\,m to 25\,m,
covering a representative indoor/short-range industrial deployment
scenario. All nodes share a uniform target distortion of
$D_k = -25$\,dB and an outage budget of $\varepsilon_k = 0.1$.
The radio link operates at a carrier frequency of $f_c = 2.4$\,GHz
with a path-loss exponent $n = 3.5$, modeling an obstructed
propagation environment. The total available bandwidth is
$B_\mathrm{tot} = 2$\,MHz and the total transmit power budget is
$P_\mathrm{tot} = 10$\,dBm. Each node is guaranteed a minimum
bandwidth of $B_\mathrm{min} = 200$\,kHz.
All simulation parameters are summarized in
Table~\ref{tab:sim_params}, and the per-node configuration is detailed
in Table~\ref{tab:node_params}.

The proposed \gls{rid}-aware \gls{pf} policy (Algorithm~1) is
compared against a source-ignorant Gaussian-assumption baseline, which sets $d_k =
1$ for all nodes regardless of the true signal structure, and applies
an equivalent \gls{pf} bandwidth allocation under the same power and outage constraints.

\begin{figure}[!t]
    \centering
    \includegraphics[width=\columnwidth]{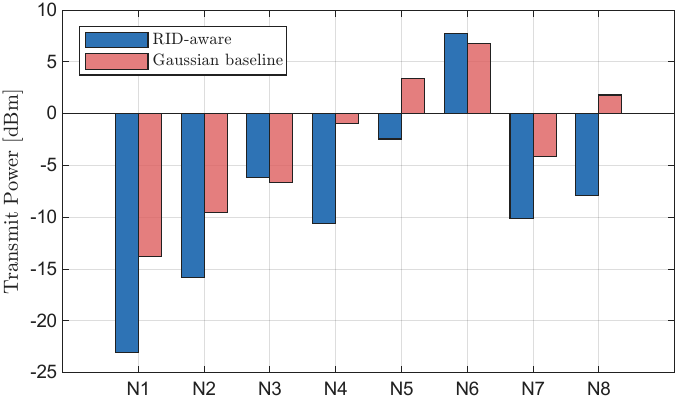}
    \caption{Transmit power levels for nodes N1 through N8. The \gls{rid}-aware policy (blue) achieves significant power savings compared to the \gls{rid}-unaware Gaussian baseline (red) by exploiting the intrinsic information dimension of the sensed signals.}
    \label{fig:PowerAllocation}
\end{figure}

\begin{figure}[!t]
    \centering
    \includegraphics[width=\columnwidth]{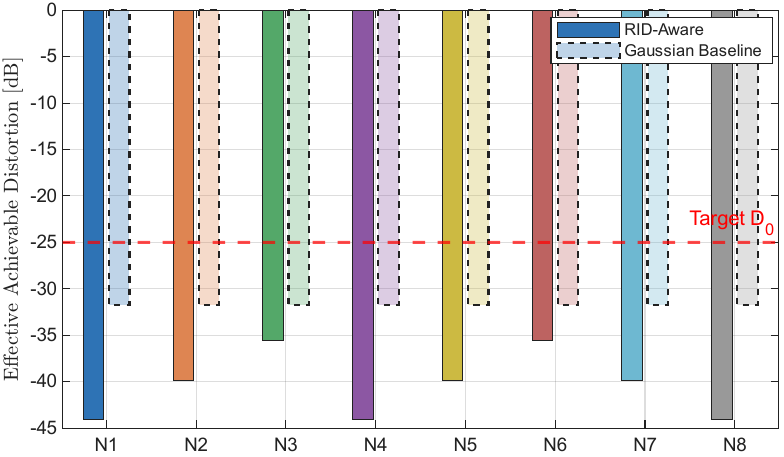}
    \caption{Effective achievable distortion across the sensor nodes. While both policies aim to meet the target distortion threshold (red dashed line), the \gls{rid}-aware framework provides a superior distortion margin for nodes with structured (low-\gls{rid}) source distributions.}
    \label{fig:effDistortion}
\end{figure}

\subsection{Results and Discussion}

Fig.~\ref{fig:BWallocation} compares the bandwidth allocated to each
node under the Gaussian baseline and the proposed \gls{rid}-aware \gls{pf} policy.
Under the Gaussian baseline, the allocation is primarily driven by
path-loss differences, with N1 (the closest node) receiving the largest
share ($\approx 491.5$\,kHz) due to its favorable channel conditions.
The \gls{rid}-aware policy modifies this distribution by additionally
accounting for each node's intrinsic signal complexity: N1, which
reports a low-\gls{rid} ($d = 0.50$), is further
rewarded with an even larger bandwidth share ($\approx 550.5$\,kHz),
while nodes reporting higher-\gls{rid} (N3, N6) receive
allocations close to their minimum floor. This behavior is consistent
with the physical rationale underlying the \gls{pf} rule~\eqref{eq:bw_pf}: lower-$\pi_k$
nodes benefit more from additional bandwidth increments, yielding
higher power efficiency per allocated Hz.


Fig.~\ref{fig:PowerAllocation} shows the transmit power assigned to
each node under both policies. The \gls{rid}-aware allocation systematically
reduces the required power at every node relative to the Gaussian
baseline, with the largest savings observed at nodes reporting
structured low-\gls{rid} signals (N1, N4, N8). This is a
direct consequence of the reduced power spectral cost $\pi_k$ for
low-\gls{rid} sources, as formalized in Remark~3 and~\eqref{eq:price}. The per-node power saving factor $\eta_k^P$ quantifies this advantage in closed form: for
$D_k = -25$\,dB and $d_k = 0.50$, the theoretical saving exceeds
10\,dB, consistent with the numerical results shown.


Fig.~\ref{fig:effDistortion} reports the effective achievable
distortion at each node after resource allocation. Both policies target
the same distortion threshold $D_k = -25$\,dB (red dashed line). The
\gls{rid}-aware framework not only meets the target at all nodes, but also
provides a substantially larger distortion margin for nodes with
low-\gls{rid} source distributions. This margin reflects the fact that, at
the allocated power and bandwidth levels, structured-signal nodes
operate at an \gls{snr} significantly above the minimum required threshold,
leaving a comfortable safety margin against channel variability. The
Gaussian baseline, by contrast, allocates resources as if all signals
were equally complex, resulting in tighter margins at several nodes. These results confirm
that exploiting source structure via the \gls{rid} allows the gateway to
jointly improve distortion performance and resource efficiency.

\begin{figure}[!t]
    \centering
    \includegraphics[width=\columnwidth]{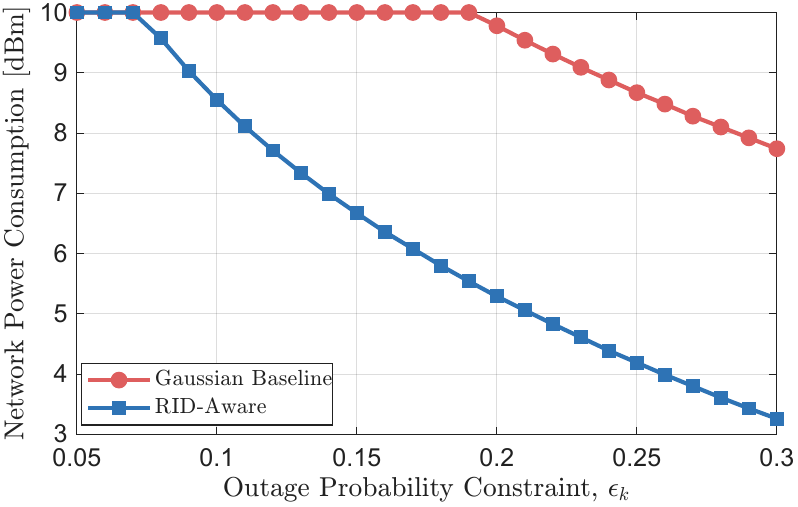}
    \caption{Total transmit power versus the outage probability constraint $\varepsilon_k$. The \gls{rid}-aware allocation consistently maintains a lower energy footprint than the Gaussian baseline across all reliability requirements, demonstrating its efficiency in fading environments.}
    \label{fig:PowerVSoutage}
\end{figure}


Finally, Fig.~\ref{fig:PowerVSoutage} illustrates the total transmit power
consumed by the network as a function of the outage probability
constraint $\varepsilon_k$. As expected, tighter outage requirements (smaller
$\varepsilon_k$) drive up the total power demand for both policies,
since the required mean \gls{snr}~\eqref{eq:snravg_req} grows as $\varepsilon_k \to 0$.
Crucially, the \gls{rid}-aware allocation maintains a consistently lower
total power than the Gaussian baseline across the entire range of
outage constraints, demonstrating that the structural advantage of the
\gls{rid}-aware policy is preserved under varying reliability requirements. Moreover, saturation to $P_{\text{tot}}$ occurs at much lower values of the outage budget compared to the Gaussian baseline.

\section{Conclusion}
This paper introduces a novel RID-aware cross-layer  resource allocation framework for HWSNs in space and extreme environments, shifting the paradigm from traditional Gaussian assumptions to a source-aware approach. By bridging the Rényi information dimension with information theory over Rayleigh fading channels, we derive a fundamental, algorithm-agnostic SNR bound and introduce the concept of power spectral cost. The proposed proportional-fair scheduling policy yields a significant power saving factor, which grows dramatically as distortion constraints tighten. Numerical simulations validate these findings, demonstrating total power reductions of up to 4.5~dB compared to the baseline while guaranteeing target reconstruction quality and outage constraints.

Future work will address the sub-optimality of the current framework. Specifically, the conservative distortion margins of low-RID sources can be exploited to further refine resource allocation and maximize network energy efficiency. Finally, extending this source-aware approach to integrated sensing and communication (ISAC) systems represents a promising research direction.

\bibliographystyle{IEEEtran}
\bibliography{references}

\end{document}